\definecolor{foocite}{rgb}{0,0.75,0}
\definecolor{foolink}{rgb}{0,0,1}
\definecolor{foourl}{rgb}{1,0,0}
\newcommand{\comment}[1]{}
\newcommand{\ignore}[1]{}
\newtheorem{theorem}{Theorem}
\def\xthm[#1][#2][#3]{\newtheorem{#2}[theorem]{#3} \newrefformat{#2}{#3 \ref{#11}}}
\newcommand{\OPT}{\ensuremath{\mathrm{OPT}}}
\newcommand{\R}{\ensuremath{\mathbb{R}}}
\newcommand{\Z}{\ensuremath{\mathbb{Z}}}
\newcommand{\vz}{\ensuremath{\mathbf{0}}}
\newcommand{\vo}{\ensuremath{\mathbf{1}}}
\newcommand{\NP}{\ensuremath{\mathsf{NP}}}
\newcommand{\PP}{\ensuremath{\mathsf{P}}}
\newcommand{\K}{\ensuremath{\mathcal{K}}}
\renewcommand{\L}{\ensuremath{\mathcal{L}}}
\renewcommand{\P}{\ensuremath{\mathcal{P}}}
\begin{document}
\title{An LP with Integrality Gap $1+\epsilon$ for Multidimensional Knapsack}
\author{David Pritchard\footnote{\'Ecole Polytechnique F\'ed\'erale de Lausanne and partially supported by an NSERC post-doctoral fellowship.}}
\date{\today}

\maketitle
\begin{abstract}
In this note we study packing or covering integer programs with at most $k$ constraints, which are also known as \emph{$k$-dimensional knapsack problems}. For integer $k > 0$ and real $\epsilon > 0 $, we observe there is a polynomial-sized LP for the $k$-dimensional knapsack problem with integrality gap at most $1+\epsilon$. The variables may be unbounded or have arbitrary upper bounds. In the (classical) packing case, we can also remove the dependence of the LP on the cost-function, yielding a polyhedral approximation of the integer hull. This generalizes a recent result of Bienstock~\cite{Bienstock08} on the classical knapsack problem.
\end{abstract}

\section{Introduction}
The classical \emph{knapsack problem} is the following: given a collection of items each with a value and a weight, and given a weight limit, find a subset of items whose total weight is at most the weight limit, and whose value is maximized. If $n$ denotes the number of items, this can be formulated as the integer program $\{\max \sum_{i=1}^n x_iv_i \mid x \in \{0,1\}^n, \sum_{i=1}^n x_i w_i \le \ell\}$ where $n$ denotes the number of items, $v_i$ denotes the value of item $i$, $w_i$ denotes the weight of item $i$, and $\ell$ denotes the weight limit.

In the more general \emph{$k$-dimensional knapsack} (or $k$-constrained knapsack) problem, there are $k$ different kinds of ``weight" and a limit for each kind. An example for $k=3$ would be a robber who is separately constrained by the total mass, volume, and noisiness of the items he is choosing to steal. An orthogonal generalization is that the robber could take multiple copies of each item $i$, up to some prescribed limit of $d_i$ available copies. We therefore model the $k$-dimensional knapsack problem as
\begin{equation}
\{\max cx \mid x \in \Z^n, 0 \le x \le d, Ax \leq b\}\label{eq:ilpknap}\end{equation}
where $A$ is a $k$-by-$n$ matrix, $b$ is a vector of length $k$, and $d$ is a vector of length $n$, all non-negative and integral. Two special cases are common: if $d = \vo$ we call it the \emph{$0\textrm{-}1$ knapsack problem}; if $d = +\infty$, we call it the \emph{unbounded knapsack problem}.

Another natural generalization is the \emph{$k$-dimensional knapsack-cover problem},
$$\{\min cx \mid x \in \Z^n, 0 \le x \le d, Ax \geq b\}$$
which has analogous unbounded and 0-1 special cases. We sometimes call this version the \emph{covering version} and likewise \eqref{eq:ilpknap} is the \emph{packing version}.

On the positive side, for any fixed $k$, all above variants admit a simple pseudo-polynomial-time dynamic programming solution.
Chandra et al.~\cite{CHW76} gave the first PTAS (polynomial-time approximation scheme) for $k$-dimensional knapsack in 1976, and later an LP-based scheme was given by Frieze and Clarke~\cite{FC84}. See the book by Kellerer et al.~\cite[\S 9.4.2]{KPP04} for a more comprehensive literature review.
The case $k=1$ also admits a fully polynomial-time approximation scheme (FPTAS), but for $k \ge 2$ there is no FPTAS unless \PP=\NP. This was originally shown for 0-1 $k$-dimensional knapsack by Gens \& Levner~\cite{GL79} and Korte \& Schrader~\cite{KS80} (see also \cite{KPP04}) and subsequently for arbitrary $d$ by Magazine \& Chern~\cite{MC84}.

Our main result is the following:
\begin{theorem}\label{theorem:maint}
Let $k$ and $\epsilon$ be fixed. Given a $k$-dimensional knapsack (resp.~knapsack-cover) instance $\K$, there is a polynomial-sized extended LP relaxation $\L$ of $\P$ with $\OPT(\P) \ge (1-\epsilon)\OPT(\L)$ (resp.~with $\OPT(\P) \le (1+\epsilon)\OPT(\L)$).
\end{theorem}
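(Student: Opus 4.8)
The plan is to construct an extended LP relaxation by combining the standard LP relaxation with an enumeration-over-large-items trick, which is the classical approach underlying both the Chandra et al.\ PTAS and Bienstock's polyhedral result. The key insight is that in any near-optimal integral solution, only a bounded number of "expensive" or "heavy" items can contribute significantly, so we can afford to guess their contribution explicitly while handling the remaining "cheap/light" items via a fractional relaxation. Since $k$ and $\epsilon$ are fixed, this enumeration stays polynomial-sized.

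**First I would** make precise what "large" means. For the packing version, I would guess the total value $V$ contributed by the $O(k/\epsilon)$ most valuable chosen items (those whose individual value exceeds $\epsilon \cdot \OPT$, of which a near-optimal solution can contain only $O(1/\epsilon)$). The difficulty is that $\OPT$ is unknown, so instead I would let the LP range over a polynomial grid of candidate threshold values $T$ (e.g.\ powers of $(1+\epsilon)$ covering the possible objective range), and for each threshold, build a sub-LP in which every item with $c_i > \epsilon T$ is forced to $x_i = 0$ while a separate guessed profit accounts for the large items. The variables $x_i$ for small items are relaxed to $0 \le x_i \le d_i$, and their fractional value is bounded. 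The full extended LP is then a disjunction (realized via extra variables and a single linking constraint, so the formulation remains a genuine LP of polynomial size) over the polynomially many thresholds and the polynomially many configurations of large items.

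**The crux of the argument** is the rounding lemma: given an optimal fractional solution to the chosen sub-LP, I would round the small-item variables down to integers and argue that the rounding loss is at most $\epsilon \cdot \OPT$. Here the $k$-dimensionality matters: rounding down $n$ fractional coordinates is always feasible for the packing constraints $Ax \le b$, and since each small item contributes value at most $\epsilon T \approx \epsilon \cdot \OPT$, but we might lose up to $k$ fractional coordinates' worth at the vertices of the polytope. The standard fact that a vertex of $\{x : Ax \le b, 0 \le x \le d\}$ with $k$ rows has at most $k$ non-integral coordinates is what bounds the loss by $k \cdot \epsilon T$; rescaling $\epsilon$ by a factor of $k$ (legitimate since $k$ is fixed) then yields the $(1-\epsilon)$ guarantee. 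This is the step I expect to be the \emph{main obstacle}, because one must carefully ensure the guessed large-item contribution is simultaneously integral-feasible and matches what the LP optimum records.

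**For the covering version** the argument is dual but symmetric: one rounds the small fractional coordinates \emph{up} rather than down (which can only help feasibility for $Ax \ge b$), again incurring at most $k$ extra small items of cost $\le \epsilon T$ each, giving the $(1+\epsilon)$ bound. Finally, to obtain the cost-free polyhedral statement promised in the abstract for the packing case, I would observe that the construction forces large items to zero independent of their values—only the \emph{weights} $A$ and bounds $b, d$ enter the feasibility region—so projecting out the auxiliary variables gives a polytope whose integer hull is $(1-\epsilon)$-approximated uniformly over all cost vectors $c$.
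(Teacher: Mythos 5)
Your overall skeleton (filtering high-value items, the at-most-$k$-fractional-coordinates vertex lemma, and disjunctive programming to merge the guesses into one LP) matches the paper's, but the specific filtering mechanism you propose has a genuine flaw: your notion of ``large'' is defined relative to an \emph{external} grid threshold $T$, and nothing in your sub-LP for a pair $(T,S)$ ties the value of the solution to $T$. The rounding loss at an extreme point of the sub-polytope attaining the disjunction's maximum is therefore bounded only by $k\epsilon T^*$, and $T^*$ can be far larger than both $\OPT$ and that sub-LP's own value, so your step ``each small item contributes value at most $\epsilon T \approx \epsilon\cdot\OPT$'' is unjustified --- this is exactly the obstacle you flagged but did not resolve. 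Concretely, take $k=1$, capacity $2s$, items $1,2$ each of weight $s+1$ and value $1$ (with $d_i=1$), plus a ``pump'' item of value $1/\epsilon$ and weight $3s$, which fits in no integral solution. Then $\OPT=1$, but your grid covers the objective range up to $2+1/\epsilon$, hence contains some $T^*$ with $\epsilon T^*\ge 1$; at this threshold items $1,2$ are ``small'' while the pump item is ``large'' and belongs to no feasible configuration $S$ (its weight exceeds the capacity), so the sub-LP for $(T^*,\emptyset)$ is just the naive LP over items $1,2$, with value $2s/(s+1)\to 2$. Your disjunction is still a valid relaxation, but its integrality gap tends to $2$, not $1+\epsilon$.

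The paper avoids this by making the filtering threshold \emph{intrinsic to the solution}: the guess $g$ is the multiset of the $\gamma=k/\epsilon$ most profitable items of the solution itself, the residual polytope forbids every item more profitable than the least profitable guessed item, and any guess with fewer than $\gamma$ items has its residual zeroed out (such solutions are enumerated exactly, which also disposes of instances like the one above). Then every item available in the residual problem has value at most $c\cdot g/\gamma$, and since $g$ is part of the very point $y$ being rounded, the loss $k\,c\cdot g/\gamma\le(k/\gamma)\,c\cdot y=\epsilon\,c\cdot y$ is charged against the LP value itself; no external estimate of $\OPT$ is ever needed. (Your scheme could likely be repaired by adding the constraint $c\cdot y\ge T/(1+\epsilon)$ to the sub-LP for threshold $T$, but as written it lacks any such link.) Separately, your closing claim that the construction is independent of $c$ contradicts your own definition: ``large'' is decided by comparing $c_i$ with $\epsilon T$, so your $\L$ depends on $c$. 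The paper's cost-independent version (its Theorem 2, packing only) instead guesses, for each constraint $i$, the $\gamma$ items with largest coefficients in row $i$, and uses a different round-up-then-delete argument.
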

Here ``polynomial-sized extended LP relaxation" means the following. First, $\P$ has $n$ variables. Then $\L$ must have those $n$ variables plus a polynomial number of other ones. The projection $\L'$ of $\L$ onto the first $n$ variables must contain the same integral solutions as $\P$. Finally, $\L$ and $\P$ must have the same objective function, i.e.~the objective function should ignore the extended variables.

In the proof, we will see that the LP can be constructed in polynomial time, and that a near-optimal integral solution can be obtained from an optimal extreme point fractional solution just by rounding down (resp.~up). The number of variables in the LP is $n^{O(k/\epsilon)}$ and the number of constraints is $kn^{O(k/\epsilon)}$. The \emph{integrality gap} of an IP is the worst-case ratio between the fractional and integral optimum and therefore \prettyref{theorem:maint} can be equivalent stated as saying that $\P$ has integrality gap at most $1+\epsilon$.

Our result and the techniques we use are a generalization of a recent result of Bienstock~\cite{Bienstock08}, which dealt with the packing version for $k=1$. The key observation we contribute is that his ``filtering" approach was also traditionally used to get a PTAS for multi-dimensional knapsack; in \emph{filtering} we exhaustively guess the $\gamma$ max-cost items in the knapsack for some constant $\gamma$.

The construction of $\L$ in \prettyref{theorem:maint} turns out to depend on the cost function $c$. A more interesting and challenging problem is to find an $\L$ which is independent of the cost-function, since this gives a \emph{polyhedral approximation} $\L'$ of $\P$ e.g.~in the packing case, it implies $\L' \supset \P \supset (1-\epsilon)\L'$.
Bienstock's result~\cite{Bienstock08} actually gives an LP which does not depend on the item cost/profits $c$. We will show (in \prettyref{sec:indep}) that in the packing case, our approach can be similarly revised:
\begin{theorem}\label{theorem:maintwo}
Let $k$ and $\epsilon$ be fixed. Given a $k$-dimensional knapsack instance $\K$, there is a polynomial-sized extended LP relaxation $\L$ of $\P$ with $\OPT(\P) \ge (1-\epsilon)\OPT(\L)$, such that $\L$ does not depend on $c$. 
\end{theorem}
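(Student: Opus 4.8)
The plan is to re-use the filtering architecture behind \prettyref{theorem:maint}, but to replace the cost-based guessing step — the only place where that construction consults $c$ — by a guessing step that reads off the matrix $A$ alone. Recall that in the $c$-dependent version one guesses the $\gamma = \Theta(k/\epsilon)$ largest-\emph{cost} items of an optimal integral solution, freezes them in, and argues that the surviving items are so cheap that rounding an extreme point down loses at most an $\epsilon$-fraction. Since $A$, $b$ and $d$ are available without reference to $c$, I would instead call an item $i$ \emph{heavy} if $A_{ji} > \tau b_j$ for some row $j$, where $\tau := \epsilon/k$, and call every other item \emph{light}. Because each heavy-in-coordinate-$j$ copy consumes more than a $\tau$-fraction of $b_j$, any feasible solution contains at most $k/\tau = O(k^2/\epsilon)$ heavy copies in total, so there are only polynomially many (for fixed $k,\epsilon$, matching the $n^{O(k/\epsilon)}$ bound of \prettyref{theorem:maint}) feasible \emph{heavy multisets} $S$, i.e.\ multisets of heavy items that by themselves satisfy $A\,\vo_S \le b$ and $\vo_S \le d$. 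Crucially, this classification does not look at $c$.

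For each feasible heavy multiset $S$ I would form the polytope
\[
\L_S = \{\, x : 0 \le x \le d,\ Ax \le b,\ x_i = (S)_i \text{ for every heavy } i \,\},
\]
in which the heavy coordinates are frozen to their values in $S$ and the light coordinates range over the residual capacity $b - A\,\vo_S$. I would then take $\L$ to be a Balas-style extended formulation of $\mathrm{conv}\!\big(\bigcup_S \L_S\big)$, whose size is polynomial in the number of branches and hence polynomial for fixed $k,\epsilon$, and which manifestly does not depend on $c$. Optimizing a linear objective over $\L$ returns $\max_S \OPT(\L_S)$ and is attained at a vertex of a single $\L_S$. That $\L$ is a genuine relaxation, $\L \supseteq \P$, is immediate: the heavy part of any integral feasible point is one of the enumerated multisets $S$, so that point already lies in $\L_S$; and conversely $\mathrm{conv}(\bigcup_S \L_S)$ is contained in $\{Ax\le b,\ 0\le x\le d\}$, so it introduces no spurious integral solutions.

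It remains to prove $\OPT(\P) \ge (1-\epsilon)\OPT(\L)$ for \emph{every} $c$, and here lies the main work. Fix $c$, let $S^\star$ be the optimal branch and $x^\star = \vo_{S^\star}+z^\star$ its optimal vertex, with $z^\star$ the fractional light part; rounding down gives the integral feasible point $\vo_{S^\star}+\lfloor z^\star\rfloor \in \P$. The key lemma — and the crux of the argument — is that this rounding loses at most an $\epsilon$-fraction \emph{simultaneously for all} $c$, which is precisely what lets the enumeration be weight- rather than cost-based. At a vertex of the light polytope at most $k$ coordinates are fractional (at most $k$ tight constraints are knapsack rows, the rest being box constraints), and for any fractional coordinate $i$ complementary slackness against the packing dual $(y,w)\ge 0$ gives $c_i = \sum_j y_j A_{ji}$; since $i$ is light this is at most $\tau\, b^\top y \le \tau\,\OPT(\L_{S^\star})$, because $b^\top y$ is dominated by the dual objective. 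Summing over the at most $k$ fractional light items bounds the loss by $k\tau\,\OPT(\L_{S^\star}) = \epsilon\,\OPT(\L)$, with no reference to $c$.

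I expect the main obstacle to be exactly this dual/complementary-slackness estimate in the residual multidimensional problem: one must verify that the light columns are small relative to the \emph{same} $b$ that the residual LP uses, that the upper-bound duals $w\ge 0$ can only help the inequality $b^\top y \le \OPT$, and that the bound survives the passage from a single $\L_S$ to the Balas formulation of the union. Arbitrary multiplicities $d$ (rather than the $0$–$1$ case) require only the observation that a heavy-in-$j$ item can be used at most $1/\tau$ times, keeping the heavy multisets polynomially many, while the light side is unaffected. Finally I would note that, unlike \prettyref{theorem:maint}, this route is special to the packing case: the weight-based split together with the down-closed structure of $\P$ is what validates the one-sided rounding, consistent with \prettyref{theorem:maintwo} being stated for packing only.
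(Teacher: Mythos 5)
There is a genuine gap, and it sits exactly where you suspected: the claim that ``$b^\top y$ is dominated by the dual objective'' is false. The dual objective of the residual LP for a branch $S$ is $(b - A\vo_S)^\top y + d^\top w$, not $b^\top y$; once the frozen heavy multiset consumes most of the capacity, $b^\top y$ can exceed $\OPT(\L_S)$ by a factor of order $1/\epsilon$, and the bound $c_i \le \tau\, b^\top y$ for a fractional light coordinate becomes vacuous. Worse, this is not merely a defect of the proof but of the construction itself: your LP genuinely has constant integrality gap. Take $k=1$, $\tau = \epsilon$, $b = 1$, $d = \vo$, and two items: a heavy item $h$ with weight $1-\epsilon/2$ and cost $1$, and a light item $\ell$ with weight exactly $\epsilon$ (so it passes the threshold $A_{1\ell} \le \tau b_1$) and cost $1$. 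The branch $S = \{h\}$ freezes $x_h = 1$ and leaves residual capacity $\epsilon/2$, so its optimal vertex is $(x_h, x_\ell) = (1, \tfrac12)$ with value $\tfrac32$; this is also the optimum of $\mathrm{conv}(\L_\emptyset \cup \L_{\{h\}})$. But the integral optimum is $1$ (the two items together overflow the knapsack), so $\OPT(\P) = 1 < (1-\epsilon)\OPT(\L) = (1-\epsilon)\tfrac32$ for any $\epsilon < \tfrac13$. The root cause is that a \emph{weight}-threshold classification says nothing about \emph{costs}: a light item may carry a constant fraction of the objective, and rounding its fractional value down is not compensated by anything else in the branch.

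The paper avoids this by keeping the guess solution-dependent and by changing the rounding direction. For each constraint $i$ one guesses the $\gamma$ items \emph{of the solution} with largest coefficients in row $i$ (realized, as in your proposal, by a disjunction over all guesses); given an extreme point $x$ with at most $k$ fractional coordinates, one rounds \emph{up} to $y = \lceil x \rceil$, which violates each constraint $i$ by less than $k$ times the largest row-$i$ coefficient of a non-guessed item, and then repairs constraint $i$ by deleting the $c$-cheapest $k$ items of the guessed set $g^i$ --- any $k$ items of $g^i$ restore feasibility of row $i$, because their coefficients dominate those of all remaining items. Since $g^i$ consists of $\gamma$ items all present in $y$, its $k$ cheapest items cost at most $\frac{k}{\gamma}c(y)$ by averaging, so the total repair cost over all $k$ constraints is at most $\frac{k^2}{\gamma}c(y)$, and taking $\gamma = k^2/\epsilon$ gives the theorem. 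The essential ingredient your scheme lacks is this averaging step: it never requires any individual un-guessed item to be cheap, which is precisely the (unattainable) property a static weight-based split would need.
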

This comes as the cost of an increase in size to $kn^{O(k^2/\epsilon)}$. For the covering case performing the same (a polynomial-sized extended LP relaxation independent of $c$ with integrality gap $\le 1+\epsilon$) is an interesting open problem; we elaborate at the end.

\subsection{Related Work}
Knapsack (whether packing or covering) has an FPTAS by dynamic programming, and it is well-known that dynamic programs of such a form can be solved as a shortest-path problem, which has an LP formulation. Nonetheless, there is no evident way to combine these steps to get an LP for knapsack with integrality gap $1+\epsilon$. The problem (say, for packing, which is simpler) is that last step in the FPTAS is not merely to return the last entry of the DP table, but rather it finds the maximum scaled profit such that the minimum volume to obtain it fits inside the knapsack (and then recovers the actual solution). The naive fix is adding this volume constraint to the LP but it makes the LP non-integral and then it is not clear how to proceed.

Bienstock \& McClosky~\cite{BM08} extend the work of Bienstock~\cite{Bienstock08} to covering problems and other settings, and also give an LP of size $n^2(1/\epsilon)^{\frac{1}{\epsilon}\log \frac{1}{\epsilon}}$ with integrality gap $1+\epsilon$ for 1-dimensional, 0-1 covering knapsack.\footnote{They use a disjunctive program; in essence, the LP guesses the most costly item in the knapsack, then for $i = 1, \dotsc, O(\frac{1}{\epsilon}\log\frac{1}{\epsilon})$ it guesses the number of items whose costs are $(1+\frac{1}{\epsilon})^{-(i, i+1]}$ times that cost, with all guesses $> \frac{1}{\epsilon}$ deemed equivalent. In particular the LP depends on the cost function. We remark that the method does not readily extend to $k$-dimensional knapsack. } There is some current work \cite{CSh08} on obtaining primal-dual algorithms (that is, not needing the ellipsoid method or interior-point subroutines) for knapsack-type covering problems with good approximation ratio and \cite{BM08} reports that the methods of \cite{CSh08} extend to a combinatorial LP-based approximation scheme for 1-dimensional covering knapsack.

Answering an open question of Bienstock~\cite{BM08} about the efficacy of automatic relaxations for the knapsack problem, Karlin et al.~\cite{KMN10} recently found that the ``Laserre hierarchy" of semidefinite programming relaxations, when applied to the 1-dimensional 0-1 packing knapsack problem, gives an SDP with integrality gap $1+\epsilon$ after $O(1/\epsilon^2)$ rounds.

Knapsack problems have a couple of interesting basic properties. The first contrasts with Lenstra's result~\cite{Lenstra1983} that for any fixed $k$, integer programs with $k$ constraints can be solved in polynomial time; in comparison, if we have nonnegativity constraints for every variable plus \emph{one other constraint}, we get the unbounded (1-dimensional) knapsack problem, which is \NP-hard~\cite{Lueker75}. Second, recall that for any optimization problem whose objective is integral, and whose optimal value is polynomial in the input size, any FPTAS can be used to get a pseudopolynomial-time algorithm. In contrast, 0-1 2-dimensional knapsack shows the converse is false: it has a pseudopolynomial-time algorithm, but getting an FPTAS is \NP-hard even when each profit $c_i$ is 1, e.g.~see~\cite[Thm.~9.4.1]{KPP04}.

\comment{We give two other recent developments in this field. There is a line of work in \emph{counting} the number of feasible solutions to a given $k$-dimensional knapsack problem (in which case there is no objective function $c$) and Dyer~\cite{Dyer03} recently gave a simple dynamic programming-based FPRAS (fully-polynomial time randomized approximation scheme) to count the number of feasible solutions for $k$-dimensional bounded packing knapsack. Separate from this,}
There is a line of work on maximizing constrained submodular functions. For non-monotone submodular maximization subject to $k$ linear packing constraints, the state of the art is by Lee et al.~\cite{LMNS09} who give a $(5+\epsilon)$-approximation algorithm. For monotone submodular maximization the state of the art is by Chekuri \& Vondr\'{a}k~\cite{CV09} who give a $(e/(e-1)+\epsilon)$-approximation subject to $k$ knapsack constraints and a matroid constraint. We note it is \NP-hard to obtain any factor better than $e/(e-1)$ for monotone submodular maximization over a matroid~\cite{F98}, so in this setting knapsack constraints only affect the best ratio by $\epsilon$, just like in our setting of LP-relative approximation.

\subsection{Overview}
First, we review rounding and filtering. Rounding is a standard approach to turn an optimal fractional solution into a nearly-optimal integral one, and here we lose up to $k$ times the maximum per-item profit. Filtering works well with rounding because it reduces the maximum per-item profit; the power of these ideas is already enough to get an LP-based approximation scheme~\cite{FC84}, but it uses a separate LP for each ``guess" made in filtering. Therefore, like Bienstock~\cite{BM08}, we use disjunctive programming~\cite{Ba79} to combine all the separate LPs into a single one. The approach has some similarity to the knapsack-cover inequalities of Carr et al.~\cite{CFLP00}.

\section{Rounding and Filtering}\label{sec:knap-ptas}
We now explain the approach.
A knapsack instance \eqref{eq:ilpknap} is determined by the parameters $(A, b, c, d)$. The na\"ive LP relaxation of the knapsack problem is
\begin{equation}\{\max cx \mid x \in \R^n, 0 \le x \le d, Ax \leq b\}.\label{eq:kdimpacklp}\tag*{$\K(A, b, c, d)$}\end{equation}
In the following, \emph{fractional} means non-integral. The following lemma is standard.

\begin{lemma}
Let $x^*$ be an extreme point solution to the linear program \eqref{eq:kdimpacklp}. Then $x^*$ is fractional in at most $k$ coordinates. \label{lemma:knapstruct}
\end{lemma}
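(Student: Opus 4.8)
The plan is to invoke the standard vertex characterization: for a polyhedron $\{x \in \R^n : Mx \le q\}$, a point $x^*$ is an extreme point if and only if the inequalities of $Mx \le q$ that are \emph{tight} (satisfied with equality) at $x^*$ have rank $n$. For the polyhedron underlying $\K(A,b,c,d)$, the defining inequalities fall into three families: the $n$ lower bounds $-x_i \le 0$, the $n$ upper bounds $x_i \le d_i$, and the $k$ rows $A_j x \le b_j$ of $Ax \le b$. I would begin by selecting, from among the constraints tight at $x^*$, a subset $S$ of exactly $n$ that are linearly independent; this is possible precisely because the tight system has rank $n$.

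The key counting step is then to observe that $S$ contains at most $k$ of the rows of $Ax \le b$, since there are only $k$ such rows in total. Hence at least $n - k$ members of $S$ are bound constraints, each of the form $e_i^\top x = 0$ or $e_i^\top x = d_i$ for some coordinate $i$. The one subtlety I would flag here is that a single coordinate $i$ cannot contribute two linearly independent bound constraints: $-x_i \le 0$ and $x_i \le d_i$ are negatives of one another as linear forms, so including both would violate the linear independence of $S$ (and in any case both can be tight only when $d_i = 0$). Consequently the $\ge n - k$ bound constraints in $S$ involve $\ge n-k$ \emph{distinct} coordinates.

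Finally, each such bound constraint pins its coordinate to the value $0$ or to $d_i$, both of which are integral since $d$ is integral by assumption. Therefore at least $n - k$ coordinates of $x^*$ are integral, leaving at most $k$ coordinates fractional, as claimed. I do not expect any genuine obstacle beyond the distinct-coordinate observation above; the argument is essentially the general fact that a vertex of a polyhedron in $\R^n$ is determined by $n$ tight, linearly independent facets, specialized to the remark that all but $k$ of the available facets are axis-parallel and integer-valued.
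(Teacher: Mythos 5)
Your proof is correct and follows essentially the same route as the paper: an extreme point satisfies $n$ linearly independent constraints with equality, at most $k$ of which can be rows of $Ax \le b$, forcing at least $n-k$ distinct coordinates to the integral values $0$ or $d_i$. In fact your linear-independence observation handles the degenerate case $d_i = 0$ slightly more carefully than the paper's assertion that $x_i \ge 0$ and $x_i \le d_i$ cannot both be tight.
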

\begin{proof}
It follows from elementary LP theory that $x^* \in \R^n$ satisfies $n$ (linearly independent) constraints with equality. There are $k$ constraints of the form $A_jx \le b_j$; all other constraints are of the form $x_i \ge 0$ or $x_i \le d_i$, so at least $n-k$ of them hold with equality. Clearly $x_i \ge 0$ and $x_i \le d_i$ cannot both hold with equality for the same $i$, so $x^*_i \in \{0, d_i\}$ for at least $n-k$ distinct $i$, which gives the result.
\end{proof}

Therefore, we obtain the following primitive guarantee on a rounding strategy. Let $\lfloor \cdot \rfloor$ applied to a vector mean component-wise floor and let $c_{\max} := \max_i c_i$.
\begin{cor}\label{cor:knapround}
Let $x^*$ be an extreme point solution to the linear program \eqref{eq:kdimpacklp}. Then $c \lfloor x^* \rfloor \ge cx^* - kc_{\max}$.
\end{cor}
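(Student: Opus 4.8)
The plan is to derive this corollary directly from \prettyref{lemma:knapstruct} by bounding the total profit lost when rounding each coordinate of $x^*$ down to $\lfloor x^*_i \rfloor$. The loss in coordinate $i$ is exactly $c_i(x^*_i - \lfloor x^*_i \rfloor)$, so the total profit gap satisfies $cx^* - c\lfloor x^* \rfloor = \sum_{i=1}^n c_i(x^*_i - \lfloor x^*_i \rfloor)$. The whole argument reduces to showing this sum is at most $kc_{\max}$.

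First I would invoke \prettyref{lemma:knapstruct}: since $x^*$ is an extreme point, it is fractional in at most $k$ coordinates. For every coordinate $i$ where $x^*_i$ is already integral, we have $x^*_i = \lfloor x^*_i \rfloor$, so these coordinates contribute nothing to the sum. Hence the sum ranges effectively over at most $k$ fractional coordinates. Next I would bound each surviving term: for any fractional coordinate the fractional part $x^*_i - \lfloor x^*_i \rfloor$ lies strictly between $0$ and $1$, and since all profits are nonnegative we get $c_i(x^*_i - \lfloor x^*_i \rfloor) \le c_i \cdot 1 \le c_{\max}$. Summing at most $k$ such terms yields $cx^* - c\lfloor x^* \rfloor \le kc_{\max}$, which rearranges to the claimed inequality $c\lfloor x^* \rfloor \ge cx^* - kc_{\max}$.

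There is essentially no main obstacle here; the statement is a routine consequence of the structural lemma, and the only point requiring a moment's care is that the nonnegativity of $c$ (guaranteed by the problem setup, where $A$, $b$, $c$, $d$ are all nonnegative) is what lets us bound each term by $c_{\max}$ rather than merely by $c_i$ times the fractional part. One could tighten the bound to a strict inequality or replace $c_{\max}$ by the sum of the $k$ largest profits, but the stated form with $kc_{\max}$ is exactly what the filtering step will later exploit, so I would keep it as is.
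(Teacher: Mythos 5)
Your proof is correct and is exactly the argument the paper intends: the corollary is stated as an immediate consequence of \prettyref{lemma:knapstruct}, with the rounding loss confined to the at most $k$ fractional coordinates, each costing less than $c_{\max}$. Nothing is missing, and your side remarks (nonnegativity of $c$, possible tightening to the $k$ largest profits) are accurate.
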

Now the idea is to take $x^*$ to be an optimal fractional solution, and use filtering (exhaustive guessing) to turn the additive guarantee into a multiplicative factor of $1+\epsilon$. Let $\gamma$ denote a parameter, which represents the size of a multi-set we will guess. For a non-negative vector $z$ let the notation $\lVert z \rVert_1$ mean $\sum_i z_i$. A \emph{guess} is an integral vector $g$ with $0 \le g \le d, Ag \le b$ and $\lVert g \rVert_1 \le \gamma$. It is easy to see the number of possible guesses is bounded by $(n+1)^\gamma$, and that for any constant $\gamma$ we can iterate through all guesses in polynomial time.

From now on we assume without loss of generality (by reordering items if necessary) that $c_1 \le c_2 \le \dotsb \le c_n$.
For a guess $g$ with $\lVert g \rVert_1 = \gamma$ we now define the \emph{residual knapsack problem} for $g$. The residual problem models how to optimally select the remaining objects \emph{under the restriction} that the $\gamma$ most profitable\footnote{To simplify the description, even if $c_{i+1} = c_i$ we think of item $i+1$ as more profitable than item $i$.} items chosen (counting multiplicity) are $g$. Let $\mu(g)$ denote $\min \{i \mid g_i > 0\}$.
Define $d^g$ to be the first $\mu(g)$ coordinates of $d-g$ followed by $n-\mu(g)$ zeroes, and $b^g = b - Ag$. The \emph{residual knapsack problem} for $g$ is $(A, b^g, c, d^g)$.
The residual problem for $g$ does not permit taking items with index more than $\mu(g)$ and so its $c_{\max}$ value may be thought of as $c_{\mu(g)}$ or less, which is at most $c \cdot g / \lVert g \rVert_1 = c \cdot g / \gamma$.

If a guess $g$ has $\lVert g \rVert_1 < \gamma$, define $b^g$ and $d^g$ to be all-zero. Then \prettyref{cor:knapround} gives the following.
\begin{cor}\label{cor:knapoptround}
Let $x_{\OPT}$ be an optimal integral knapsack solution for $(A, b, c, d)$. Let
$g$ be the $\gamma$ most profitable items in $x_{\OPT}$ (or all, if there are less than $\gamma$). Let $x^*$ be an optimal extreme point solution to $\K(A, b^g, c, d^g)$. Then $g + \lfloor x^* \rfloor$ is a feasible knapsack solution for $(A, b, c, d)$ with value at least $1-k/\gamma$ times optimal.
\end{cor}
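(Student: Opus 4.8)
The plan is to exhibit $x_{\OPT}-g$ as a feasible point of the residual LP $\K(A,b^g,c,d^g)$, use optimality of $x^*$ to transfer the value $cx_{\OPT}$ into that LP, and then invoke \prettyref{cor:knapround} to bound the loss incurred by rounding $x^*$ down. Choosing $\gamma=\lVert g\rVert_1$ equal to the number of guessed items is precisely what converts the additive loss of \prettyref{cor:knapround} into the multiplicative factor $1-k/\gamma$. Throughout I would work in the main case $\lVert g\rVert_1=\gamma$ and dispatch the degenerate case $\lVert g\rVert_1<\gamma$ at the very end.

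First I would verify that the proposed solution $g+\lfloor x^*\rfloor$ is feasible for $(A,b,c,d)$. Feasibility of $x^*$ for the residual problem gives $Ax^*\le b^g=b-Ag$ and $0\le x^*\le d^g$; since $A\ge 0$ we have $A\lfloor x^*\rfloor\le Ax^*$, so $A(g+\lfloor x^*\rfloor)\le b$. For the box constraints, coordinates $i>\mu(g)$ have $d^g_i=0$, forcing $\lfloor x^*_i\rfloor=0$, and there $g_i\le d_i$ because $g\le x_{\OPT}\le d$; for $i\le\mu(g)$ the bound $x^*_i\le d_i-g_i$ with $d_i-g_i$ integral yields $g_i+\lfloor x^*_i\rfloor\le d_i$.

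The crux, which I expect to be the main obstacle, is checking that $x_{\OPT}-g$ is feasible for $\K(A,b^g,c,d^g)$. From $Ax_{\OPT}\le b$ we immediately get $A(x_{\OPT}-g)\le b-Ag=b^g$, and $x_{\OPT}-g\ge 0$ because $g$ is a sub-multiset of $x_{\OPT}$. The delicate inequality is $x_{\OPT}-g\le d^g$: for $i>\mu(g)$ we need $(x_{\OPT}-g)_i=0$, i.e.\ $g$ must already include \emph{every} copy of item $i$ used by $x_{\OPT}$. This is exactly where filtering enters---because $g$ consists of the $\gamma$ most profitable items and the $c_i$ are sorted increasingly, every item with index exceeding $\mu(g)$ is more profitable than item $\mu(g)$ (which itself appears in $g$) and hence is taken to full multiplicity in $g$. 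For $i\le\mu(g)$ one has $(x_{\OPT}-g)_i\le d_i-g_i=d^g_i$ directly.

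With $x_{\OPT}-g$ feasible, optimality of $x^*$ gives $cx^*\ge c(x_{\OPT}-g)=\OPT-cg$. Applying \prettyref{cor:knapround} to the residual instance---whose usable items are only $1,\dotsc,\mu(g)$, so its effective $c_{\max}$ is $c_{\mu(g)}$---gives $c\lfloor x^*\rfloor\ge cx^*-kc_{\mu(g)}$, and since each of the $\gamma$ items in $g$ has cost at least $c_{\mu(g)}$ we have $c_{\mu(g)}\le cg/\gamma$. Combining, $c(g+\lfloor x^*\rfloor)\ge cg+(\OPT-cg)-k\,cg/\gamma=\OPT-k\,cg/\gamma$, and $cg\le\OPT$ (as $0\le g\le x_{\OPT}$ and $c\ge 0$) yields the desired $(1-k/\gamma)\OPT$. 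Finally, in the degenerate case $\lVert g\rVert_1<\gamma$ we have $g=x_{\OPT}$, the residual data $b^g,d^g$ are all-zero, so $x^*=0$ and $g+\lfloor x^*\rfloor=x_{\OPT}$ attains $\OPT$ outright.
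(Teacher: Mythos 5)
Your proof is correct and takes essentially the same route as the paper's: exhibit $x_{\OPT}-g$ as a feasible point of the residual LP to get $c\cdot x^* \ge \OPT - c\cdot g$, then apply \prettyref{cor:knapround} with the effective $c_{\max} \le c\cdot g/\gamma \le \OPT/\gamma$ bound. The only difference is that you spell out the verifications the paper leaves implicit (feasibility of $g+\lfloor x^*\rfloor$, the full-multiplicity argument for indices above $\mu(g)$, and the degenerate case $\lVert g\rVert_1 < \gamma$), which is sound bookkeeping rather than a different approach.
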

\begin{proof}
We use $\OPT$ to denote $c \cdot x_{\OPT}$.
Note that $x_{\OPT}-g$ is feasible for the residual problem for $g$. Therefore $c \cdot x^* \ge \OPT - c \cdot g$. Moreover $c_{\max}$ in the residual problem for $g$ is not more than $c \cdot g / \gamma \le \frac{\OPT}{\gamma}$, so \prettyref{cor:knapround} shows that $$c \cdot \lfloor x^* \rfloor \ge c \cdot x^* - k \frac{\OPT}{\gamma} \ge \OPT - c \cdot g - k \frac{\OPT}{\gamma}$$
and consequently $\lfloor x^* \rfloor + g$ is a solution with value at least $\OPT(1-\frac{k}{\gamma})$, as needed.
\end{proof}
By taking $\gamma = k/\epsilon$ and solving $\K(A, b^g, c, d^g)$ for all possible $g$ we get the previously known PTAS for $k$-dimensional knapsack; we now refine the approach to get a single LP.

\section{Disjunctive Programming}
We now review some disjunctive programming tools~\cite{Ba79}. The only result we need is that it is possible to write a compact LP for the convex hull of the union of several polytopes, provided that we we have compact LPs for each one.

Suppose we have polyhedra $P^1 = \{x \in \R^n \mid A^1x \le b^1\}$ and $P^2 = \{x \in \R^n \mid A^2x \le b^2\}$. Both of these sets are convex and it is therefore easy to see that the convex hull of their union is the set
$$\textrm{conv.hull}(P^1 \cup P^2) = \{x \in \R^n \mid x = \lambda x^1 + (1-\lambda) x^2, 0 \le \lambda \le 1, A^1x^1 \le b^1, A^1x^2 \le b^2\}.$$
However, this is not a \emph{linear} program, e.g.\ since we multiply the variable $\lambda$ by the variables $x^1$. Nonetheless, it is not hard to see that the following is a linear formulation of the same set:
$$\textrm{conv.hull}(P^1 \cup P^2) = \{x \in \R^n \mid x = x^1 + x^2, 0 \le \lambda \le 1, A^1x^1 \le \lambda b^1, A^1x^2 \le (1-\lambda)b^2\}.$$
A similar construction gives the convex hull of the union of any number of polyhedra; we now apply this to the knapsack setting.

The LP $\K(A, b^g, c, d^g)$ was constructed to mean the left-over problem after making a guess $g$ of the $\gamma$ most profitable items; we similarly shift this LP to get $\{y = x + g \mid x \in \R^n, 0 \le x \le d^g, Ax \leq b^g\}$ which is the same set, after the guessed part is added back in.

Let $\mathcal G$ denote the set of all possible guesses $g$. Then the convex hull of the union of the shifted polyhedra is given by the feasible region of the following polyhedron:
\begin{equation}\Bigl\{y \mid  y = \sum_{g \in \mathcal G} y^g; \sum_{g \in \mathcal G} \lambda^g = 1; \lambda \ge \vz; \forall g:  y^g = x^g + \lambda^g g, \vz \le x^g \le \lambda^gd^g, Ay^g \le \lambda^gb^g \Bigr\}.\label{eq:convun}\tag{\L}\end{equation}
We attach objective $\max c \cdot y$ to \eqref{eq:convun} to make it into an LP, and use it to prove \prettyref{theorem:maint}.


\begin{proof}[Proof of \prettyref{theorem:maint}, packing version]
Let $y$ be an optimal extreme point solution for \eqref{eq:convun}. It is straightforward to argue that any extreme point solution has $\lambda^{g^*} = 1$ for some particular $g^*$, and $\lambda^{g} = 0$ for all other $g$. Hence $y = x^{g^*} + g^*$ where $x^{g^*}$ is an optimal extreme point solution to $\K(A, b^{g^*}, c, d^{g^*})$. We now show that $\lfloor y \rfloor$ is a $(1-\epsilon)$-approximately optimal solution, re-using the previous arguments.

If $\lVert g^* \rVert_1 < \gamma$, then $x^{g^*} = 0$ so $y$ is integral, hence $y$ is an optimal knapsack solution. Otherwise, if $\lVert g^* \rVert_1 = \gamma$, then \prettyref{cor:knapround} shows that
$$c \cdot \lfloor y \rfloor = c\cdot \lfloor x^{g^*} \rfloor + c\cdot g^* \ge c \cdot x^{g^*} - k\frac{c \cdot g^*}{\gamma}  + c \cdot g^* = c \cdot y - k\frac{c \cdot g^*}{\gamma}  \ge (1-\epsilon) c \cdot y,$$
which completes the proof.
\end{proof}

The corresponding result for the covering version is very similar. One difference is that we round up instead of down. The other is that some guesses become inadmissible. Let $g$ be an integral vector with $0 \le g \le d, \lVert g \rVert_1 \le \gamma$; we define $\mu(g), d^g$ as before and call $g$ a \emph{guess} only if $A(g + d^g) \ge b$, in which case we set $b^g$ to be the component-wise maximum of $\vz$ and $b-Ag$.

\section{Removing the Dependence on $c$ for Packing Problems}\label{sec:indep}
In the LPs described above, for each guess $g$, we treated that guess as the set of most \emph{profitable} items. In particular, $b^g$ and $d^g$ are defined in a way that depends on $c$. We now show in the packing case, how to write a somewhat larger LP, still with integrality gap $1+\epsilon$, which is defined independently of $c$. This exactly follows the approach of Bienstock~\cite{Bienstock08}; what we will do is guess the \emph{biggest} items for each constraint, rather than the most profitable items. The technique does not seem to have an easy analogue for covering problems.

In detail, previously, we guessed the multiset $g$ of $\gamma$ most profitable items in the solution. Instead, let us guess a $k$-tuple $(g^1, g^2, \dotsc, g^k)$ where for each $k$, $g^i$ is the set of $\gamma$ items in the solution which have largest coefficients with respect to the $i$th constraint (breaking ties in each constraint in any consistent way). What we need is that any extreme feasible solution with at most $k$ fractional values can be rounded to an integral feasible solution at a relative cost factor of at most $\epsilon$. Let the original extreme point LP solution be $x$. We round each fractional value up to the closest integer, which causes the solution to become an infeasible one, call it $y = \lceil x \rceil$. Then, to retain feasibility, we go through each of the $k$ constraints, pick the $c$-smallest set of $k$ items from $y$ whose deletion causes the constraint to again become satisfied; and we delete the union of these sets from $y$, obtaining $z$. Each set has $c$-cost at most $\frac{k}{\gamma} c(y)$ since for each constraint $i$, any $k$ elements from $g^i$ form an eligible set for deletion, and $y \subset g^i$ consists of at least $\gamma$ items. Thus $c(z) \ge c(y)-k\frac{k}{\gamma} c(y) = (1-k^2/\gamma)c(x)$. Taking $\gamma = k^2/\epsilon$ (compared to the previous $k/\epsilon$), we get the desired result.

\section{Discussion}
We believe that the main result is a nice theoretical illustration of techniques (filtering, rounding, disjunctive programming). However, it remains to be seen if it could be given useful applications. The disjunctive programming trick is definitely senseless sometimes: if we want to write an LP-based computer program to $(1+\epsilon)$-approximately solve multidimensional knapsack instances, it is more efficient to consider the LP corresponding to each guess separately (as in \cite{FC84}) rather than solve the gigantic LP obtained by merging them together. Sometimes an LP-relative~\cite{KPP08} (or Lagrangian-preserving~\cite{CRW04,ABHK09}) approximation algorithm can be used as a subroutine in ways that a non-LP-relative one could not. However, at least in \cite{CRW04,KPP08,ABHK09}, the analysis relied on LP-relative or Lagrangian-preserving analysis of the na\"ive LP, and an arbitrary LP would not have fared as well, and the LP we build here seems not to be useful in this way.

Finding a compact formulation for $k$-dimensional covering knapsack with small integrality gap and such that the LP does \emph{not} depend on the objective function is an interesting open problem. For example, we are not aware of any polynomial-sized extended LP for 1-dimensional covering knapsack with constant integrality gap, in sharp contrast to the packing case. A partial result for $k$-dimensional covering knapsack is the knapsack-cover LP~\cite{CFLP00} (see also~\cite{PC10,CGK10,CCKN10} for applications); it has integrality gap at most $2k$, and while it is not polynomial size, it can be $(1+\epsilon)$-approximately separated~\cite{CFLP00} and hence $(1+\epsilon)$-approximately optimized~\cite{GK07,GLS88} in polynomial time.

From a theoretical perspective, it also seems challenging to find an LP for 2-dimensional (packing) knapsack where the size of the LP is a function of $1/\epsilon$ times a polynomial in $n$, as was done in \cite{BM08} for the 1-dimensional version.

\subsection*{Acknowledgments}
We thank Laura Sanit\`a for helpful discussions on this topic.

\bibliography{../../../huge}
\bibliographystyle{abbrv}

\typeout{Label(s) may have changed. Rerun}
\end{document}